\documentclass[11pt]{article}

\usepackage[utf8]{inputenc}
\usepackage[T1]{fontenc}
\usepackage{mathptmx}
\usepackage[margin=1in]{geometry}
\usepackage{amsmath,amssymb,amsthm}
\usepackage{graphicx}
\usepackage{booktabs}
\usepackage{authblk}
\usepackage{xcolor}
\usepackage{caption}
\usepackage{subcaption}
\usepackage[numbers,sort&compress]{natbib}
\usepackage{algorithm}
\usepackage{algpseudocode}
\usepackage{titlesec}
\usepackage[expansion=false]{microtype}
\usepackage{enumitem}
\usepackage{float}
\usepackage[hidelinks]{hyperref}

\titleformat{\section}{\normalfont\large\bfseries}{\thesection}{0.6em}{}
\titleformat{\subsection}{\normalfont\normalsize\bfseries}{\thesubsection}{0.6em}{}

\graphicspath{{Figures/}}

\newtheorem{proposition}{Proposition}

\newtheorem{definition}{Definition}

\newcommand{\bx}{\mathbf{x}}
\newcommand{\bff}{\mathbf{f}}
\newcommand{\bz}{\mathbf{z}}
\newcommand{\bW}{\mathbf{W}}
\newcommand{\bb}{\mathbf{b}}
\newcommand{\bd}{\mathbf{d}}
\newcommand{\R}{\mathbb{R}}
\newcommand{\E}{\mathbb{E}}
\newcommand{\KL}{\mathrm{KL}}
\newcommand{\TopK}{\mathrm{TopK}}
\newcommand{\AUC}{\mathrm{AUC}}

\title{\bfseries Causal dictionary learning reveals and validates transcription-factor binding features in genomic language models}

\author[1]{Sarwan Ali}
\affil[1]{Columbia University Irving Medical Center, New York, USA

sa4559@cumc.columbia.edu}

\date{}

\begin{document}
\maketitle

\begin{abstract}
\noindent
Genomic language models achieve strong performance across regulatory-genomics
tasks, yet what these models internally represent remains opaque, and the field
lacks a principled procedure for verifying that an apparent ``concept'' inside a model is real rather than an artifact of sequence composition. We introduce a
framework that combines sparse dictionary learning with causal intervention to
extract, validate, and causally test interpretable features in genomic
foundation models. Training top-$k$ sparse autoencoders on the hidden activations
of two architecturally distinct models, Nucleotide Transformer ($6$-mer
tokenization) and DNABERT-2 (byte-pair encoding), we recover thousands of
monosemantic features that map to transcription-factor (TF) sequence motifs. We
show that the naive validation of such features against position weight matrices
is severely confounded by GC composition and repetitive elements, producing
hundreds of spurious ``TF features'', and we develop a composition-matched,
binding-resolved protocol that removes these confounds. Critically, we move
beyond correlation: by ablating individual dictionary directions during the
model's forward pass and measuring the induced shift in the model's own
predictive distribution, we establish that specific features are
\emph{causally} used to represent cell-type-specific TF binding, not merely motif
presence. Across three transcription factors (CTCF, GATA1, REST) and both
architectures, causally validated binding features emerge reproducibly
($7$--$14$ of $15$ tested features per condition), while two classes of negative
control, scrambled binding labels and randomly selected features, yield no
detectable signal. The framework is purely computational, uses only public data,
and provides a reusable standard for interpretability claims in genomic deep
learning.
\end{abstract}

\section{Introduction}

DNA language models trained by self-supervision on reference genomes have rapidly
become general-purpose tools in regulatory genomics, supporting variant-effect
prediction, chromatin-state annotation and regulatory-element
discovery~\citep{dallatorre2025nucleotide,zhou2024dnabert2,schiff2024caduceus,
nguyen2023hyenadna,nguyen2024evo,benegas2023gpn}. Their predictive utility is now
well documented, but their interpretability is not. Unlike supervised models
built around designed features~\citep{zhou2015deepsea,kelley2018basenji,
avsec2021enformer}, foundation models distribute information across
high-dimensional activations in which individual neurons are
polysemantic, responding to many unrelated sequence properties at
once~\citep{elhage2022superposition}. Consequently, it is rarely possible to
state what a genomic language model has actually learned, which limits both
scientific trust and the use of these models for hypothesis generation about
genome regulation~\citep{novakovsky2023obtaining}.

In natural-language processing, sparse autoencoders (SAEs) have emerged as a
powerful remedy for polysemanticity: by reconstructing model activations through
an over-complete, sparsely active dictionary, they decompose distributed
representations into approximately monosemantic
features~\citep{cunningham2023sparse,bricken2023monosemanticity,
templeton2024scaling,gao2024scaling}. Whether this paradigm transfers to genomic
models, and whether the resulting features correspond to bona fide biological
entities, has not been established. Two obstacles stand in the way. First,
genomic models differ fundamentally in how they tokenize DNA, fixed $k$-mers,
learned byte-pair vocabularies, or single nucleotides, and it is unknown whether
dictionary learning behaves consistently across these schemes. Second, and more
seriously, the genome's sequence statistics make naive validation treacherous:
many transcription-factor motifs are GC-rich, as are CpG islands and the most
abundant repetitive elements in the human
genome~\citep{deininger2011alu,schmitges2016ctcfalu}, so a feature that merely
detects local GC content or an \emph{Alu} repeat will spuriously appear
``enriched'' for any GC-rich motif. An interpretability method that does not
control for this will systematically over-report success.

We address both obstacles and, in doing so, propose a general standard. Our
contributions are as follows. (i)~We train top-$k$ SAEs on the activations of two
genomic language models with different tokenizations and show that monosemantic,
motif-aligned features arise robustly in both, with middle layers yielding the
most interpretable dictionaries. (ii)~We demonstrate quantitatively that
position-weight-matrix enrichment, the obvious validation, is dominated by
compositional confounds, and we introduce a GC-matched, binding-resolved test
that isolates genuine signal. (iii)~Most importantly, we develop a
\emph{causal} validation: ablating a single dictionary direction during the
forward pass and measuring the Kullback--Leibler shift in the model's masked-token
predictions at bound versus unbound motif sites. This distinguishes features the
model \emph{uses} to represent binding from features that merely correlate with
it. (iv)~We validate the whole framework with stringent negative controls and
show cross-architecture, cross-TF reproducibility. The result is a purely
computational, fully reproducible procedure, built entirely on public models and
data, for making, and disciplining, interpretability claims about genomic deep
learning.

\section{Related Work}

\paragraph{Genomic language models.}
Self-supervised models of DNA have progressed rapidly from early $k$-mer
transformers~\citep{ji2021dnabert} to large multi-species encoders such as the
Nucleotide Transformer~\citep{dallatorre2025nucleotide} and the
byte-pair-encoded DNABERT-2~\citep{zhou2024dnabert2}, and to long-context
architectures including HyenaDNA~\citep{nguyen2023hyenadna}, the state-space model
Caduceus~\citep{schiff2024caduceus} and the genome-scale Evo
model~\citep{nguyen2024evo,gu2023mamba}. These models support competitive
variant-effect prediction~\citep{benegas2023gpn} and regulatory annotation, but
recent evaluations question how much regulatory signal their representations
actually contain~\citep{tang2024evaluating}, underscoring the need for tools that
probe what they encode rather than only how well they predict. Supervised
sequence models such as DeepSEA~\citep{zhou2015deepsea},
Basenji~\citep{kelley2018basenji} and Enformer~\citep{avsec2021enformer} remain
the performance reference for binding and expression prediction; our aim is not
to compete with them but to interpret the self-supervised models.

\paragraph{Interpretability of deep models in genomics.}
Attribution methods such as DeepLIFT~\citep{shrikumar2017deeplift} and
SHAP~\citep{lundberg2017shap}, together with in-silico mutagenesis, are the
standard tools for explaining genomic neural
networks~\citep{novakovsky2023obtaining}. These produce per-input importance maps
but do not yield a global vocabulary of reusable features, and they inherit the
polysemanticity of individual neurons~\citep{elhage2022superposition}. Our work
instead learns an explicit, model-wide dictionary and asks which of its elements
the model causally uses.

\paragraph{Sparse autoencoders and mechanistic interpretability.}
Sparse dictionary learning on model activations has become a leading approach to
extracting monosemantic features from language
models~\citep{cunningham2023sparse,bricken2023monosemanticity,templeton2024scaling},
with the top-$k$ formulation offering a clean sparsity--fidelity
trade-off~\citep{gao2024scaling}. Validating such features is an open problem:
activation patching and causal mediation establish that components are used by a
model~\citep{vig2020causal,meng2022locating}, but apparent feature interpretations
can be illusory if not causally grounded~\citep{makelov2023subspace}. We bring
this causal-intervention discipline to genomics, where, uniquely, an external
ground truth, experimental ChIP-seq binding, is available, and where
sequence-composition confounds make naive validation especially hazardous.

\paragraph{Transcription-factor binding resources.}
We ground features against curated motif models from
JASPAR~\citep{rauluseviciute2024jaspar} and HOCOMOCO~\citep{vorontsov2024hocomoco}
and against experimental occupancy from ENCODE
ChIP-seq~\citep{encode2012,moore2020encode3}. The biology of our test factors, 
CTCF as a GC-rich insulator~\citep{ong2014ctcf} whose motifs are frequently
embedded in \emph{Alu} repeats~\citep{deininger2011alu,schmitges2016ctcfalu}, is
precisely what makes composition control essential.

\section{Methodology}

We formalize the three components of the framework, the sparse dictionary, the
composition-matched binding test, and the causal ablation, and then state two
propositions that explain why the design isolates binding-specific causal use.
The full procedure is summarized in Algorithm~\ref{alg:framework}.

\subsection{Top-$k$ sparse autoencoders}

Let $\bx\in\R^{d}$ denote a (mean-centered, scaled) model activation. A top-$k$
sparse autoencoder~\citep{gao2024scaling} encodes $\bx$ into a sparse code
$\bff\in\R^{m}$ with $m=16d$ and decodes it back:
\begin{align}
\bz &= \bW_{\mathrm{enc}}(\bx-\bb_{\mathrm{pre}}), \label{eq:enc}\\
\bff &= \TopK_k\!\big(\mathrm{ReLU}(\bz)\big), \label{eq:topk}\\
\hat{\bx} &= \bW_{\mathrm{dec}}\bff + \bb_{\mathrm{pre}}, \label{eq:dec}
\end{align}
where $\TopK_k$ retains the $k$ largest entries and zeroes the rest, the columns
$\bd_j$ of $\bW_{\mathrm{dec}}\in\R^{d\times m}$ are constrained to unit norm, and
$\bb_{\mathrm{pre}}\in\R^{d}$ is a learned pre-bias. The training objective is
reconstruction error with an auxiliary term that revives inactive features,
\begin{equation}
\mathcal{L}
= \underbrace{\big\|\bx-\hat{\bx}\big\|_2^2}_{\text{reconstruction}}
+ \alpha\,\underbrace{\big\|(\bx-\hat{\bx}) - \hat{\bx}_{\mathrm{aux}}\big\|_2^2}_{\text{dead-feature revival}},
\label{eq:loss}
\end{equation}
where $\hat{\bx}_{\mathrm{aux}}$ reconstructs the residual using only the top
dead features and $\alpha=1/16$. The fixed sparsity $\|\bff\|_0=k$ is enforced
architecturally by $\TopK_k$, avoiding the shrinkage bias of $\ell_1$ penalties.
We initialized dictionary directions from random data activations, which proved
essential for keeping the dictionary alive (raising the live fraction from
$45\%$ to $93\%$ at fixed reconstruction error).

\subsection{Composition-matched binding test}

For each transcription factor we build three classes of length-matched genomic
window: \emph{bound} (ChIP-seq peak centers containing a strong motif),
\emph{unbound-motif} (strong motif occurrences outside all peaks), and
\emph{background} (random windows). A position weight matrix $M\in\R^{L\times4}$
from JASPAR~\citep{rauluseviciute2024jaspar} is converted to a log-odds matrix
against a uniform background, and a window is deemed to contain a strong motif if
its maximum log-odds score over all offsets exceeds a fixed fraction of the
matrix maximum. To remove the compositional confound, the unbound-motif set is
GC-matched to the bound set by rejection sampling against the bound GC histogram.

For a feature $j$ we max-pool its activation $f_j$ over the tokens of each window
and test, with a one-sided Mann--Whitney $U$~\citep{mann1947test}, whether
$f_j^{\text{bound}}>f_j^{\text{unbound}}$. We summarize effect size by
$\AUC_{j}=U/(n_{\mathrm{b}}n_{\mathrm{u}})$, the probability that a random bound
window activates the feature more strongly than a random unbound-motif window. A
feature is \emph{motif-selective} if $f_j^{\text{unbound}}>f_j^{\text{background}}$
and \emph{binding-sensitive} if $\AUC_j>0.55$ at Bonferroni-corrected
significance; we additionally require $|\mathrm{corr}(f_j,\mathrm{GC})|<0.2$ for
``GC-robust'' status.

\subsection{Causal feature ablation}

Let $h_\ell(\bx)$ be the hidden state at the SAE's layer and
$g_{\ell\to L}(\cdot)$ the remainder of the network mapping that hidden state to
output logits over the token vocabulary. For dictionary direction $\bd_j$ with
encoded activation $f_j$, the ablated hidden state removes the feature's
contribution in the (normalized) SAE space,
\begin{equation}
h_\ell'(\bx) = h_\ell(\bx) - s^{-1}\,\big(f_j(\bx)\,\bd_j\big),
\label{eq:ablate}
\end{equation}
where $s$ is the activation scale used during SAE training. We then complete the
forward pass with the perturbed state and compare the model's predictive
distributions at each token position $t$ via the Kullback--Leibler divergence,
\begin{equation}
\Delta_j(\bx)
= \frac{1}{T}\sum_{t=1}^{T}
\KL\!\Big(p_t\big(g_{\ell\to L}(h_\ell)\big)\,\big\|\,
            p_t\big(g_{\ell\to L}(h_\ell')\big)\Big),
\label{eq:kl}
\end{equation}
where $p_t(\cdot)=\mathrm{softmax}$ of the logits at position $t$ and $T$ is the
number of tokens. The \emph{binding-specific causal effect} of feature $j$ is the
one-sided Mann--Whitney comparison of $\{\Delta_j(\bx)\}$ between bound and
unbound-motif windows, summarized again by an $\AUC$. A feature passes if this
$\AUC>0.55$ at Bonferroni-corrected significance.

\begin{algorithm}[H]
\caption{Causal dictionary learning and validation}
\label{alg:framework}
\begin{algorithmic}[1]
\State \textbf{Input:} genomic model with layer $\ell$; windows; ChIP-seq peaks; motif PWMs
\State \textbf{Extract} per-token activations $\{\bx\}$ at layer $\ell$ via forward hooks
\State \textbf{Train} top-$k$ SAE on $\{\bx\}$ by minimizing Eq.~\eqref{eq:loss} \Comment{Eqs.~\eqref{eq:enc}--\eqref{eq:dec}}
\State \textbf{Build} bound / unbound-motif (GC-matched) / background window sets
\For{each live feature $j$}
  \State compute $\AUC_j$ (bound vs.\ unbound) and $\mathrm{corr}(f_j,\mathrm{GC})$
  \State mark $j$ \emph{binding-sensitive} if $\AUC_j>0.55$ (Bonferroni) and $|\mathrm{corr}|<0.2$
\EndFor
\For{each binding-sensitive feature $j$ (and random controls)}
  \State ablate $\bd_j$ (Eq.~\eqref{eq:ablate}); measure $\Delta_j$ (Eq.~\eqref{eq:kl}) at bound vs.\ unbound
  \State mark $j$ \emph{causally validated} if binding-specific $\AUC>0.55$ (Bonferroni)
\EndFor
\State \textbf{Output:} causally validated binding features per factor and model
\end{algorithmic}
\end{algorithm}

\subsection{Theoretical analysis}

We formalize why the causal ablation test isolates binding-specific use, and why
the composition-matched design is necessary. Throughout, let $B\in\{0,1\}$
indicate ChIP-seq binding, let $S$ denote the observable sequence of a window,
and let $f_j(S)$ be feature $j$'s (max-pooled) activation.

\begin{proposition}[Spurious enrichment from composition]
\label{prop:confound}
Let $\rho(S)\in[0,1]$ be a compositional summary of $S$ (e.g.\ GC fraction), and
suppose a motif log-odds score $\sigma(S)$ satisfies
$\E[\sigma(S)\mid \rho(S)=r]$ strictly increasing in $r$. If a feature is a pure
composition detector, $f_j(S)=\phi(\rho(S))$ for some increasing $\phi$, then in
an unmatched comparison where the high-activation set has higher mean composition
than background, the feature's top windows are enriched for high $\sigma$, i.e.\
\[
\E\big[\sigma(S)\,\big|\, f_j(S)\ \text{large}\big]
> \E\big[\sigma(S)\big],
\]
even though $f_j \perp B \mid \rho$, i.e.\ the feature carries no information
about binding beyond composition.
\end{proposition}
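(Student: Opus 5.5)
We formalize ``$f_j(S)$ large'' as the event $A_\tau=\{f_j(S)>\tau\}$ for a threshold $\tau$ with $0<\Pr(A_\tau)<1$. Since $\phi$ is increasing, $A_\tau=\{\rho(S)\in I_\tau\}$ for an upper interval $I_\tau$ of $[0,1]$; that is, selecting top windows by $f_j$ is the same as selecting on composition alone.

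We then reduce to a one-dimensional statement by the tower property. Let $m(r)=\E[\sigma(S)\mid\rho(S)=r]$. By hypothesis $m$ is strictly increasing. Because $A_\tau$ is $\rho$-measurable,
\[
\E[\sigma(S)\mid A_\tau]=\E[m(\rho)\mid \rho\in I_\tau],\qquad \E[\sigma(S)]=\E[m(\rho)].
\]
The claim therefore becomes
\[
\E[m(\rho)\mid\rho\in I_\tau]>\E[m(\rho)],
\]
which is equivalent to $\E[m(\rho)\mid\rho\in I_\tau]>\E[m(\rho)\mid\rho\notin I_\tau]$, since the unconditional mean is a convex combination of the two.

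The key step is a Chebyshev/Harris-type association inequality. Every point of $I_\tau$ lies above every point of its complement, so $\rho\mid A_\tau$ first-order stochastically dominates $\rho\mid A_\tau^c$. Dominance is strict because both events have positive probability and the intervals are separated. Concretely, let $r^\ast=\inf I_\tau$. Then $m(\rho)\ge m(r^\ast)$ on $A_\tau$ and $m(\rho)\le m(r^\ast)$ on $A_\tau^c$. At least one of these inequalities is strict with positive probability unless $\rho$ puts all its mass at $r^\ast$, which $0<\Pr(A_\tau)<1$ rules out. This gives the strict inequality.

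The hypothesis that the high-activation set has higher mean composition than background is automatically implied by this choice of $A_\tau$. I will note that it is what guarantees nondegeneracy: the selection genuinely shifts $\rho$.

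Finally, I address the clause $f_j\perp B\mid\rho$. Since $f_j=\phi(\rho)$ is a deterministic function of $\rho$, its conditional law given $\rho$ is a point mass, and conditional independence from any $B$ is immediate. So the enrichment coexists with zero binding information beyond composition.

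I expect the main obstacle to be interpretive rather than technical: making ``large'' precise and securing strictness when $\rho$ has atoms or $\phi$ is only weakly increasing. I would handle this by working with the preimage interval and the threshold-$r^\ast$ argument above, rather than a covariance formula, so that no continuity of $\rho$ is needed.
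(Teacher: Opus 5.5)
Your proposal is correct and follows essentially the same route as the paper. You use monotonicity of $\phi$ to rewrite $\{f_j \text{ large}\}$ as an upper set in $\rho$, pass to $m(r)=\E[\sigma\mid\rho=r]$ by the tower property, and conclude because conditioning on that set stochastically increases $\rho$ while $m$ is strictly increasing; your threshold-$r^\ast$ comparison of $A_\tau$ against $A_\tau^c$ makes explicit the strictness step (needing only $0<\Pr(A_\tau)<1$) that the paper asserts without detail, and it shows the paper's assumption that the threshold lies above the median of $\rho$ is unnecessary.
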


\begin{proof}
Since $f_j=\phi(\rho)$ with $\phi$ increasing, the event $\{f_j \text{ large}\}$
equals $\{\rho > r_0\}$ for a threshold $r_0$ above the marginal median of
$\rho$. By the monotonicity of $\E[\sigma\mid\rho=r]$ in $r$ and the law of total
expectation,
\[
\E[\sigma\mid \rho>r_0]
=\!\int_{r_0}^{1}\!\E[\sigma\mid\rho=r]\,dF_{\rho\mid \rho>r_0}(r)
> \!\int_0^1 \!\E[\sigma\mid\rho=r]\,dF_\rho(r)
=\E[\sigma],
\]
because conditioning on $\rho>r_0$ stochastically increases $\rho$ and
$\E[\sigma\mid\rho=r]$ is increasing. The independence $f_j\perp B\mid\rho$ holds
by construction since $f_j$ is a deterministic function of $\rho$ alone. Hence
the feature shows motif enrichment while being conditionally uninformative about
binding.
\end{proof}

Proposition~\ref{prop:confound} formalizes the empirical failure of naive
enrichment and motivates comparing bound against \emph{GC-matched} unbound
windows: matching equalizes the marginal law of $\rho$ across groups, so a pure
composition detector satisfies $\E[f_j\mid B{=}1]=\E[f_j\mid B{=}0]$ in the
matched population and its binding-sensitivity $\AUC\to 1/2$. The
composition-matched test is therefore \emph{calibrated} against this confound.

\begin{definition}[Causal binding use]
Feature $j$ is \emph{causally used for binding} if its ablation changes the
model's output distribution more on bound than on unbound-motif inputs, i.e.\
$\E[\Delta_j \mid B=1] > \E[\Delta_j \mid B=0]$ on motif-matched inputs.
\end{definition}

\begin{proposition}[Separation of importance from binding-specificity]
\label{prop:sep}
Let the contribution removed by ablation be $c_j(S)=s^{-1}f_j(S)\bd_j$ and write
the post-layer map's local sensitivity as $J(S)=\partial g_{\ell\to L}/\partial
h_\ell$. To first order,
\begin{equation}
\Delta_j(S)\ \approx\ \tfrac{1}{2}\,c_j(S)^\top G(S)\,c_j(S),
\label{eq:taylor}
\end{equation}
where $G(S)=J(S)^\top H_{\KL}\,J(S)\succeq 0$ and $H_{\KL}$ is the Fisher
information of the softmax output. Then a feature with large average effect
$\E[\Delta_j]$ but $f_j\perp B$ (motif-only) has
$\E[\Delta_j\mid B=1]=\E[\Delta_j\mid B=0]$ on motif-matched inputs, hence
binding-specificity $\AUC=1/2$; whereas a feature whose activation is elevated on
bound inputs, $\E[f_j\mid B=1]>\E[f_j\mid B=0]$, yields
$\E[\Delta_j\mid B=1]>\E[\Delta_j\mid B=0]$ whenever $G(S)$ does not
anti-correlate with binding.
\end{proposition}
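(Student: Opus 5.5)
The plan is to prove the three parts in order: the quadratic approximation~\eqref{eq:taylor}, then the motif-only case, then the binding-elevated case. For the first part we work at a single token position $t$ and then average over $t$. Write $z=g_{\ell\to L}(h_\ell)$ for the clean logits and $z'=g_{\ell\to L}(h_\ell-c_j)$ for the ablated ones. A first-order expansion of the network remainder gives $z'-z\approx -J(S)c_j(S)$. For the softmax family, the map $\delta\mapsto\KL(p(z)\,\|\,p(z+\delta))$ satisfies three facts at $\delta=0$: it vanishes, its gradient vanishes (KL is minimized there), and its Hessian is the Fisher matrix $H_{\KL}=\mathrm{diag}(p)-pp^\top\succeq0$. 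A second-order expansion in $\delta$ then yields $\tfrac12\delta^\top H_{\KL}\delta$. Substituting $\delta=-Jc_j$ gives $\tfrac12 c_j^\top J^\top H_{\KL}Jc_j$. Summing over positions, we absorb the $1/T$ average into $G(S)=\tfrac1T\sum_t J_t^\top H_{\KL,t}J_t$. This $G(S)$ is positive semidefinite because it is a sum of congruences of PSD matrices. We state explicitly that ``first order'' means first order in the network map and second order in KL, with error $O(\|c_j\|^3)$.

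Next, substituting $c_j=s^{-1}f_j\bd_j$ gives the key scalar form
\[
\Delta_j(S)\approx \tfrac{1}{2s^2}\,f_j(S)^2\,\kappa_j(S),\qquad \kappa_j(S)=\bd_j^\top G(S)\bd_j\ge 0 .
\]
The whole dependence on binding then runs through the pair $(f_j,\kappa_j)$. For the motif-only case we need more than $f_j\perp B$: we need $(f_j,\kappa_j)\perp B$ on motif-matched inputs. This is the natural reading of ``motif-only'', in which the feature and the local geometry depend on $S$ only through motif and composition statistics that the matching equalizes. We will make this interpretation explicit as an assumption. Under it, $\Delta_j$ has the same conditional law given $B=1$ and $B=0$. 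Hence the conditional expectations agree, and the Mann--Whitney AUC, which equals $P(\Delta^{(1)}>\Delta^{(0)})+\tfrac12P(\text{tie})$ for independent draws, equals $1/2$ by exchangeability. A large $\E[\Delta_j]$ is irrelevant to this argument, and that is exactly the claimed separation of importance from specificity.

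For the binding-elevated case, we first make ``$G$ does not anti-correlate with binding'' precise. The natural formalization is $\E[\kappa_j\mid B=1,f_j]\ge\E[\kappa_j\mid B=0,f_j]$, or more simply $\kappa_j$ conditionally independent of $B$ given $f_j$ with $\E[\kappa_j\mid f_j]$ nondecreasing in $f_j$. Then
\[
\E[\Delta_j\mid B]\approx\tfrac{1}{2s^2}\,\E\big[f_j^2\,m(f_j)\mid B\big],
\qquad m(f)=\E[\kappa_j\mid f_j=f].
\]
Since $f_j\ge0$, the function $f\mapsto f^2m(f)$ is nondecreasing. The main obstacle is that a strict inequality in means of $f_j$ does not by itself order the means of a convex increasing function of $f_j$. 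I would try first to assume the stronger (and, for Mann--Whitney, natural) hypothesis that $f_j\mid B=1$ stochastically dominates $f_j\mid B=0$, with strict dominance somewhere. Under that hypothesis, a nondecreasing function that is strictly increasing on the support (given $m>0$) has a strictly larger expectation. As a fallback, I would note that Jensen alone does not suffice and would state the result under the dominance assumption. I would also remark that, under the same assumption, the AUC exceeds $1/2$ whenever $\kappa_j$ is a.s. positive, since $\Delta_j$ is then a strictly increasing transform of $f_j$ up to the independent factor.
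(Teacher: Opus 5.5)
Your proposal is correct and uses the paper's skeleton: a second-order KL expansion in Fisher (Gauss--Newton) form, reduction to $\Delta_j\approx\tfrac{1}{2s^2}f_j^2\,\bd_j^\top G\bd_j$, and a comparison of conditional expectations. It departs from the paper exactly where the paper's argument fails, and those departures are necessary.

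In the binding case, the paper infers $\E[f_j^2\mid B=1]>\E[f_j^2\mid B=0]$ from $f_j\ge 0$ and an ordering of means. This inference is false. Take $f_j\mid B=0$ uniform on $\{0,10\}$ and $f_j\mid B=1\equiv 6$: the means are $5<6$ but the second moments are $50>36$. With a constant $\gamma>0$, the claimed inequality for $\E[\Delta_j\mid B]$ then reverses. So the statement as literally worded cannot be proved. Replacing the mean hypothesis by strict stochastic dominance, together with monotonicity of $f\mapsto f^2m(f)$, is the right repair, and it is the natural one for a rank-based test. The paper's following step, ``multiply by the binding-independent $\gamma$'', also silently needs $\gamma$ to be independent of $f_j$; your conditional formalization supplies this.

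In the motif-only case, the paper uses $f_j\perp B$ and $\gamma\perp B$ separately. That does not give $\E[f_j^2\gamma\mid B=1]=\E[f_j^2\gamma\mid B=0]$: under $B=1$ set $f_j=\gamma\sim\mathrm{Bernoulli}(1/2)$, under $B=0$ take them independent $\mathrm{Bernoulli}(1/2)$, and the values are $1/2$ versus $1/4$. The paper then passes from equal means to $\AUC=1/2$, which also does not follow. Your joint assumption $(f_j,\kappa_j)\perp B$ gives equality in law, hence $\AUC=1/2$ by exchangeability.

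One small repair on your side concerns your first formalization of ``no anti-correlation'', $\E[\kappa_j\mid B=1,f_j]\ge\E[\kappa_j\mid B=0,f_j]$. It closes the chain only if you also assume $f\mapsto f^2\,\E[\kappa_j\mid B=0,f_j=f]$ is nondecreasing. The conditional-independence version you actually compute with is fine.

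In short, the paper's route is brief but rests on hypotheses that are too weak. Yours gives a valid argument at the cost of explicitly stronger assumptions, and you should build those assumptions into the statement you prove.
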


\begin{proof}
Equation~\eqref{eq:taylor} is the second-order Taylor expansion of the KL
divergence between $p_t(g(h_\ell))$ and $p_t(g(h_\ell-c_j))$ about $c_j=0$; the
first-order term vanishes because the KL divergence and its gradient are zero at
$c_j=0$, and the quadratic form uses the Fisher information $H_{\KL}$ as the
Hessian of the KL divergence at the unperturbed distribution, propagated through
$J(S)$ by the chain rule. Taking $c_j(S)=s^{-1}f_j(S)\bd_j$ gives
$\Delta_j(S)\approx \tfrac12 s^{-2} f_j(S)^2\,\bd_j^\top G(S)\bd_j$. Write
$\gamma(S)=\tfrac12 s^{-2}\bd_j^\top G(S)\bd_j \ge 0$. For a motif-only feature,
$f_j\perp B$ on motif-matched inputs, and if the geometric factor $\gamma$ does not anti-correlate with binding (in particular if it is independent of $B$, true to first order when ablation does not systematically interact with binding), then $\E[f_j^2\gamma\mid B=1]=
\E[f_j^2\gamma\mid B=0]$, giving equal conditional effects and $\AUC=1/2$. For a
binding-sensitive feature, $\E[f_j^2\mid B=1]>\E[f_j^2\mid B=0]$ since $f_j\ge0$
and its mean is larger on bound inputs; multiplying by the nonnegative,
binding-independent $\gamma$ preserves the inequality, so
$\E[\Delta_j\mid B=1]>\E[\Delta_j\mid B=0]$ and the binding-specificity $\AUC>1/2$.
\end{proof}

Proposition~\ref{prop:sep} predicts a counterintuitive phenomenon we observe
empirically (Section~\ref{sec:results}): a motif-only feature can have a far
larger \emph{absolute} ablation effect than a binding feature, yet a
binding-specificity $\AUC$ at chance, because absolute effect is governed by
$\E[f_j^2\gamma]$ while specificity is governed by the \emph{difference} of this
quantity across binding states. The causal test reads the latter.

\section{Experimental Setup}

\paragraph{Models.}
We used the publicly released checkpoints of Nucleotide Transformer (NT) v2
($500$M parameters, $6$-mer tokenization, $24$ transformer
blocks)~\citep{dallatorre2025nucleotide} and DNABERT-2 ($117$M parameters,
byte-pair encoding, $12$ blocks)~\citep{zhou2024dnabert2}. Both expose a
masked-language-model head, enabling the logit-level causal readout of
Eq.~\eqref{eq:kl}.

\paragraph{Data.}
The human reference genome (hg38), ENCODE SCREEN candidate cis-regulatory
elements~\citep{moore2020encode3}, JASPAR~2024 CORE vertebrate
motifs~\citep{rauluseviciute2024jaspar}, and ENCODE ChIP-seq IDR peak sets for
CTCF (GM12878), GATA1 (K562) and REST (K562) were obtained from their public
repositories. For activation extraction we sampled $5\times10^4$ cis-regulatory
and $5\times10^4$ random $200$\,bp windows, retaining only windows with
unambiguous nucleotides, yielding $3.6$--$4.2\times10^{6}$ token activations per
layer. Because the two models tokenize DNA differently, NT into non-overlapping
$6$-mers, DNABERT-2 into variable-length byte-pair tokens, we recorded, for every
token, the exact base-pair span it covers, mapping all features to genomic
coordinates regardless of tokenization. Special tokens were excluded from all
analyses.

\paragraph{Training and layer selection.}
For each model we sampled four evenly spaced layers and trained one top-$k$ SAE
per layer (dictionary size $16\times$ the hidden width; $k=32$) for one epoch with
Adam (learning rate $4\times10^{-4}$, batch size $4096$). Layer selection used
reconstruction fidelity and live-feature fraction (Section~\ref{sec:results},
Fig.~\ref{fig:layers}).

\paragraph{Binding sets and controls.}
Per-factor window sets were built as in Methodology, with the unbound-motif set
GC-matched to bound. Two negative controls were processed by the identical
pipeline: \textsc{scramble} (random windows with random bound/unbound labels) and
\textsc{gata1-scram} (real GATA1 motif windows with randomized labels, so that
sequence and motif content are unchanged and only the binding label is
destroyed). Population causal analyses ablate the top $15$ binding-sensitive,
GC-robust features and $15$ randomly selected live features per condition.

\paragraph{Statistics.}
All $P$-values use the one-sided Mann--Whitney $U$ test~\citep{mann1947test} and
are Bonferroni-corrected for the number of features tested; significance
thresholds are stated with each analysis.

\section{Results}
\label{sec:results}

\subsection{Sparse dictionaries are monosemantic and layer-dependent}

The autoencoders reconstructed activations faithfully while remaining sparse:
the best dictionaries explained $74$--$76\%$ of activation variance with only
$32$ of $\sim$12{,}000--16{,}000 features active per token. Reconstruction quality
and the fraction of ``live'' (ever-active) features both peaked at intermediate
depth (Fig.~\ref{fig:layers}), mirroring observations in language
models~\citep{bricken2023monosemanticity}: NT layer~$14$ ($50\%$ depth) and
DNABERT-2 layer~$6$ ($50\%$ depth) provided the best balance of fidelity and
dictionary utilization and were used in all subsequent analyses. Both
tokenization schemes therefore admit high-quality sparse dictionaries, despite
their very different vocabularies.

\begin{figure}[h!]
  \centering
  \includegraphics[width=0.6\textwidth]{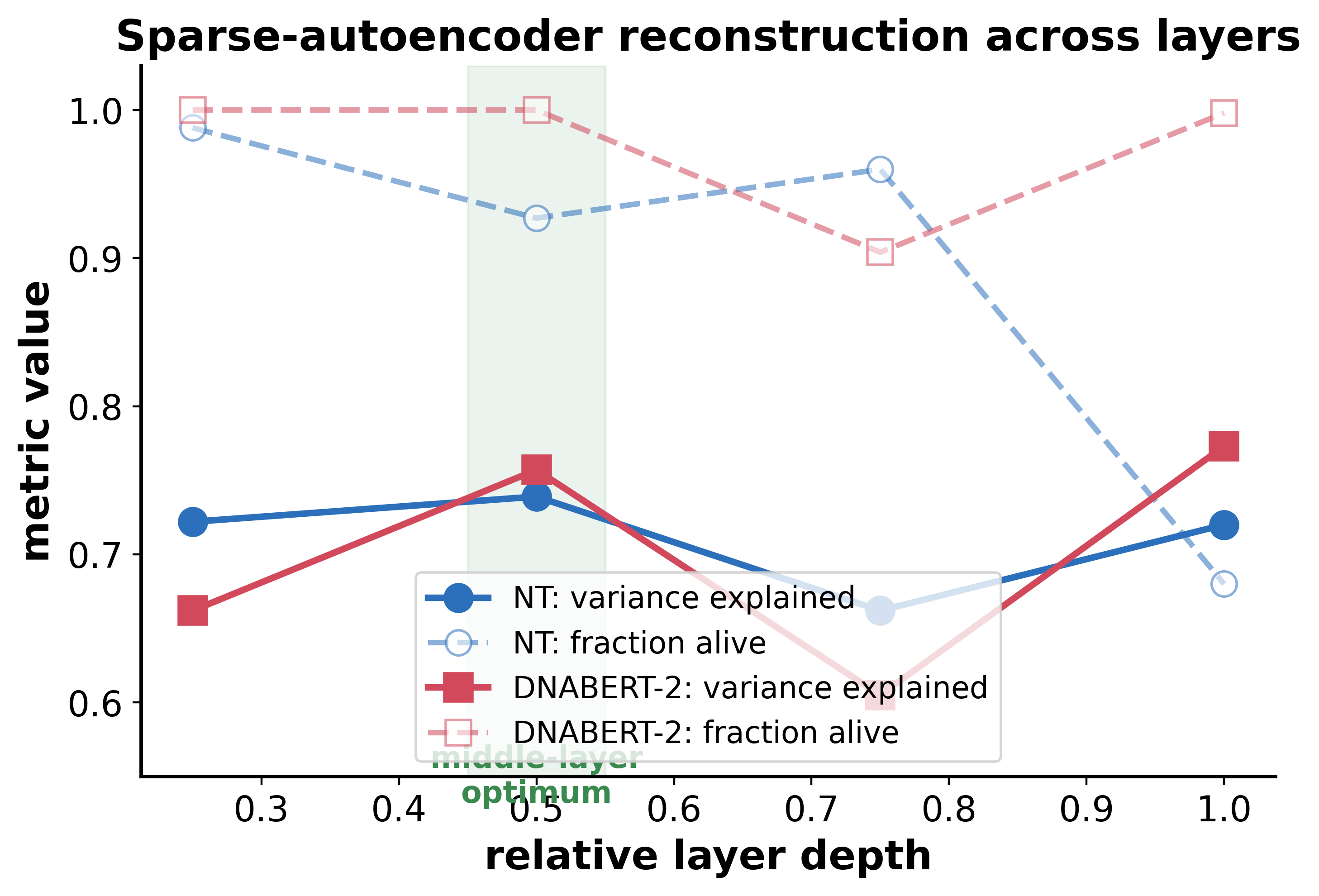}
  \caption{\textbf{Sparse-autoencoder reconstruction across model depth.}
  Variance explained (solid) and fraction of live dictionary features (dashed)
  for Nucleotide Transformer (blue) and DNABERT-2 (red) as a function of relative
  layer depth. Both architectures attain their best trade-off of reconstruction
  fidelity and feature utilization at intermediate depth (shaded), motivating the
  use of NT layer~$14$ and DNABERT-2 layer~$6$.}
  \label{fig:layers}
\end{figure}

\subsection{Naive motif enrichment is confounded by sequence composition}

To ask whether dictionary features correspond to regulatory grammar, we first
applied the obvious test: for each feature, we collected the genomic windows on
which it fired most strongly and scored them against transcription-factor
position weight matrices~\citep{rauluseviciute2024jaspar}. Using CTCF, a
ubiquitous insulator with a long, information-rich motif~\citep{ong2014ctcf},
this naive procedure flagged $440$ of $6{,}576$ tested features as significantly
CTCF-enriched.

Inspection revealed this to be largely an artifact. The single most
``CTCF-enriched'' feature fired almost exclusively on near-identical copies of the
\emph{Alu} consensus, the most abundant repetitive element in the human
genome~\citep{deininger2011alu}, which scores above background on the GC-rich
CTCF matrix without representing CTCF binding; other top features detected generic
GC-rich or CpG-island sequence. Because CTCF, CpG islands and \emph{Alu} elements
are all GC-rich, a feature that merely tracks base composition is spuriously
``enriched'' for CTCF. Comparing each feature's motif scores against a
\emph{GC-matched} background rather than an unmatched one collapsed the apparent
enrichment of these features toward zero, identifying them as compositional, as
predicted by Proposition~\ref{prop:confound}. The count of motif-selective
features for each factor after composition control is summarized in
Fig.~\ref{fig:motifcounts}.

\begin{figure}[h!]
  \centering
  \includegraphics[width=0.56\textwidth]{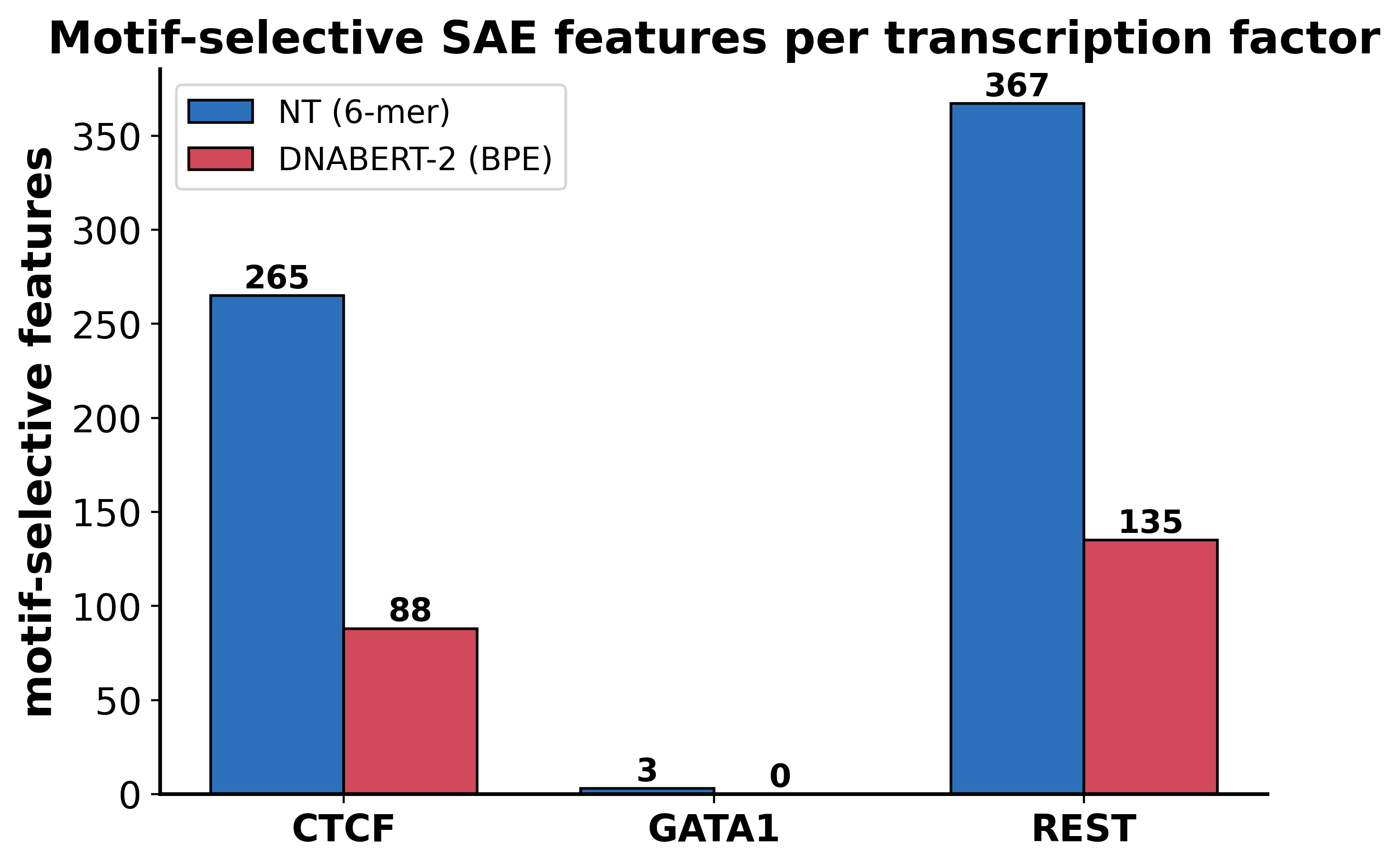}
  \caption{\textbf{Motif-selective features per transcription factor} after
  composition control, for NT (blue) and DNABERT-2 (red). CTCF and REST, long,
  information-rich motifs, yield many motif-selective features; GATA1, with a
  short \texttt{WGATAR} motif, yields almost none ($3$ and $0$, respectively), foreshadowing that its binding is
  encoded through context rather than the canonical motif.}
  \label{fig:motifcounts}
\end{figure}

\subsection{A composition-matched, binding-resolved test isolates genuine signal}

The deeper question is whether a feature encodes \emph{binding}, cell-type-specific
occupancy measured by ChIP-seq, rather than mere motif presence. Testing whether
each feature's activation is higher on bound than on GC-matched unbound-motif
windows, NT layer~$14$ contained $265$ motif-selective CTCF features but only a
small number that remained binding-sensitive after GC control; the cleanest,
feature~$8087$, exhibited a monotonic activation gradient from background to
unbound-motif to bound windows (Fig.~\ref{fig:exemplar}a) with negligible GC
correlation. A stringent, composition-aware test therefore recovers a specific,
interpretable, binding-associated feature where the naive test had returned
hundreds of confounded ones.

\subsection{Feature ablation establishes causal use of binding representations}

Correlation between activation and binding does not establish that the model
\emph{uses} a feature to represent binding. We therefore ablated single
dictionary directions during the forward pass (Eq.~\eqref{eq:ablate}) and measured
the Kullback--Leibler shift in masked-token predictions (Eq.~\eqref{eq:kl}). For
the exemplar CTCF feature, ablation shifted NT's predictions significantly more at
bound than at unbound sites ($\AUC=0.63$; $P=1.7\times10^{-18}$). Two controls
confirmed specificity (Fig.~\ref{fig:exemplar}b). Ablating a \emph{random} feature
produced no binding-specific effect ($\AUC=0.51$). More tellingly, ablating a
\emph{motif-selective but not binding-sensitive} feature produced a large overall
prediction shift, an order of magnitude larger in absolute terms than the binding
feature, yet \emph{no} bound-versus-unbound differential ($\AUC=0.51$), exactly
the dissociation predicted by Proposition~\ref{prop:sep}. The causal test thus
distinguishes features that matter to the model in general from features that
matter \emph{specifically for binding}, which a magnitude-based analysis cannot.

\begin{figure}[h!]
  \centering
  \includegraphics[width=0.9\textwidth]{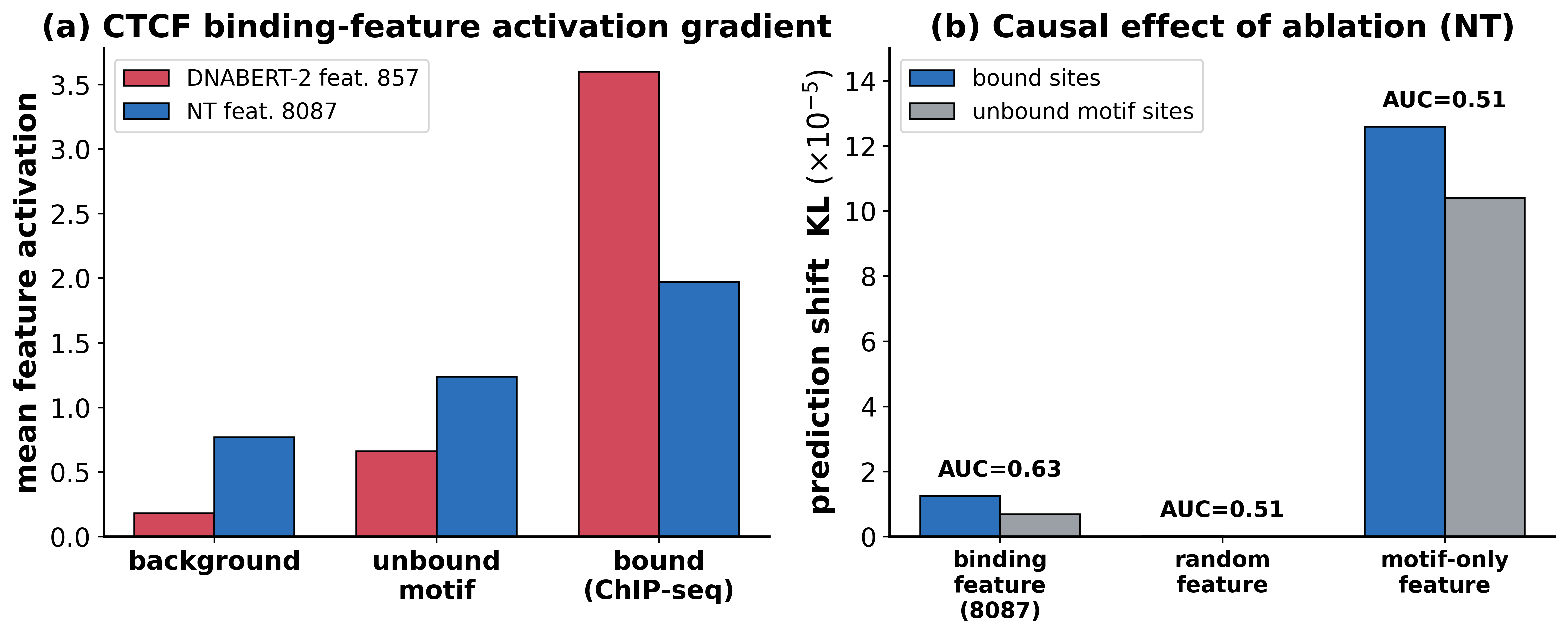}
  \caption{\textbf{A causally validated CTCF binding feature.}
  \textbf{(a)}~Mean activation of two exemplar binding features (DNABERT-2
  feature~$857$, red; NT feature~$8087$, blue) across background, unbound-motif
  and ChIP-seq-bound windows; activation increases monotonically with binding,
  not merely with motif presence. \textbf{(b)}~Causal effect of ablating each of
  three NT features, measured as the prediction-distribution shift
  (Kullback--Leibler divergence) at bound (blue) versus unbound-motif (grey)
  sites. The binding feature shows a binding-specific effect ($\AUC=0.63$); a
  random feature shows none; a motif-only feature shows a large overall effect
  but no binding specificity ($\AUC=0.51$), demonstrating that the test isolates
  binding-specific causal use rather than general feature importance.}
  \label{fig:exemplar}
\end{figure}

\subsection{Causal binding features are reproducible across factors and
architectures}

We next asked whether causal binding features are a population phenomenon and
whether the framework generalizes. For each transcription factor and model we
ablated the top $15$ binding-sensitive, GC-robust features and $15$ random
features. Across three factors of distinct structural class, CTCF (insulator),
GATA1 (lineage-specific activator) and REST (repressor), and both architectures,
a substantial fraction of binding features showed significant binding-specific
causal effects, whereas random-feature controls showed essentially none
(Fig.~\ref{fig:matrix}, Fig.~\ref{fig:auc}, Table~\ref{tab:results}). The effect
was, if anything, stronger in DNABERT-2 than in NT, indicating that byte-pair
tokenization does not impede, and may sharpen, the emergence of binding
representations.

A biologically informative subtlety emerged for GATA1. Its binding features were
only weakly motif-selective in the correlational test
(Fig.~\ref{fig:motifcounts}), which we had initially read as weak representation.
The causal test overturned this: GATA1 binding features were robustly causal in
both models ($9/15$ and $13/15$; Table~\ref{tab:results}), indicating that the
models encode GATA1 occupancy through contextual sequence features beyond the
canonical \texttt{WGATAR} motif, signal that the causal test detects but
motif-based correlation misses. This illustrates the added value of intervention
over association.

\begin{figure}[h!]
  \centering
  \includegraphics[width=0.58\textwidth]{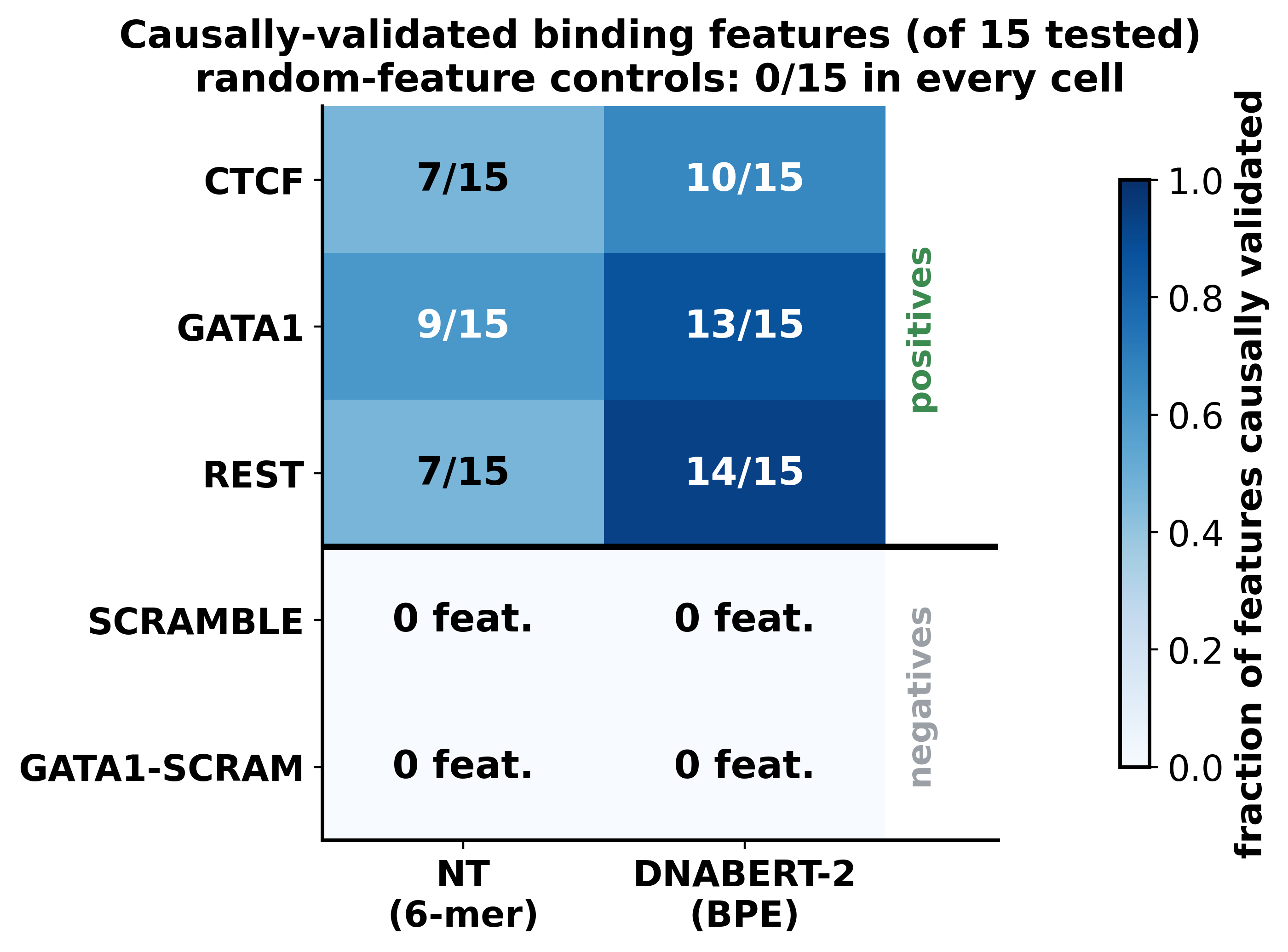}
  \caption{\textbf{Causal validation matrix.} Number of features (of $15$ tested)
  with a significant binding-specific causal effect (Bonferroni-corrected
  $P<1.7\times10^{-3}$ and $\AUC>0.55$), for each transcription factor (rows) and
  model (columns). Random-feature controls yielded $0/15$ in every cell. The two
  bottom rows are negative controls in which binding labels were scrambled; these
  produced no binding-sensitive features and hence nothing to test (``$0$
  feat.''), confirming that the framework reports a true null when no real
  binding signal is present.}
  \label{fig:matrix}
\end{figure}

\begin{figure}[h!]
  \centering
  \includegraphics[width=0.95\textwidth]{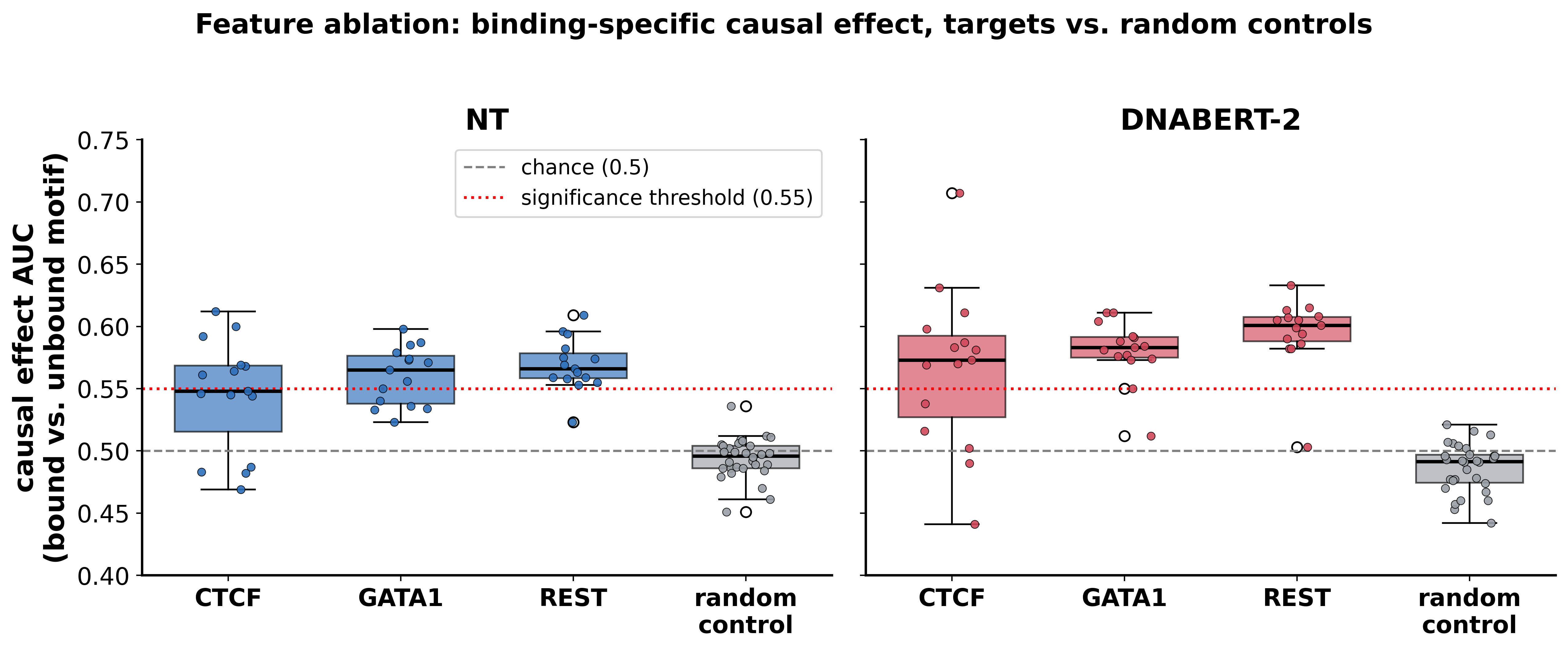}
  \caption{\textbf{Causal effect sizes separate binding features from controls.}
  Distribution of causal-effect $\AUC$ (bound versus unbound motif) for the top
  $15$ binding features of each transcription factor and for pooled random-feature
  controls, in NT (left) and DNABERT-2 (right). Target features lie above the
  significance threshold ($\AUC=0.55$, red dotted) while random controls cluster
  at chance ($\AUC=0.5$, grey dashed). Points are individual features; boxes show
  median and interquartile range.}
  \label{fig:auc}
\end{figure}

\subsection{Negative controls confirm that the framework reports true nulls}

A validation method is only trustworthy if it can return a negative. In both
negative controls, \textsc{scramble} (random windows, random labels) and the
sharper \textsc{gata1-scram} (real GATA1 motif windows, randomized labels), and
in both architectures, the binding-sensitivity test identified \emph{zero}
qualifying features, leaving nothing for the causal stage to test
(Table~\ref{tab:results}, Fig.~\ref{fig:matrix}, bottom rows). The framework thus
manufactures no signal from structured-but-unlabeled data: the positive results
for real transcription factors reflect genuine binding information, not artifacts
of window construction or feature ablation.

\begin{table}[h!]
\centering
\caption{\textbf{Causally validated binding features across transcription factors,
models and controls.} Entries give the number of features (of $15$ tested) with a
significant binding-specific causal effect (Bonferroni $P<1.7\times10^{-3}$,
$\AUC>0.55$); random-feature controls were $0/15$ throughout. Negative controls
(scrambled labels) yielded no binding-sensitive features to test. ``Motif-sel.''
gives the number of motif-selective features identified by the composition-matched test.
Dashes `---' denote not-applicable cells: the causal test is undefined for negative controls (no binding-sensitive features exist to test), and motif-selectivity is undefined for randomly chosen features.}
\label{tab:results}
\small
\begin{tabular}{l c c c c}
\toprule
& \multicolumn{2}{c}{\textbf{NT (6-mer, layer 14)}} & \multicolumn{2}{c}{\textbf{DNABERT-2 (BPE, layer 6)}}\\
\cmidrule(lr){2-3}\cmidrule(lr){4-5}
\textbf{Condition} & Causal / 15 & Motif-sel. & Causal / 15 & Motif-sel.\\
\midrule
CTCF (insulator)            & $7$  & $265$ & $10$ & $88$\\
GATA1 (activator)           & $9$  & $3$   & $13$ & $0$\\
REST (repressor)            & $7$  & $367$ & $14$ & $135$\\
\midrule
\textsc{scramble} (neg.)    & ---  & $0$   & ---  & $0$\\
\textsc{gata1-scram} (neg.) & ---  & $0$   & ---  & $0$\\
\midrule
Random-feature control      & $0$  & ---   & $0$  & ---\\
\bottomrule
\end{tabular}
\end{table}

\section{Discussion}

Three findings have implications beyond our specific models. First, the dominant
obstacle to interpreting genomic models is not extracting features but
\emph{validating} them: position-weight-matrix enrichment, the field's default,
is confounded by GC content and repetitive elements to the point of returning
hundreds of spurious ``TF features'', a failure we both observed empirically and
proved is guaranteed under mild conditions (Proposition~\ref{prop:confound}).
Composition-matched, binding-resolved testing is therefore not optional but
necessary. Second, association and causation diverge in an informative way. The
GATA1 case exhibits a factor whose binding is causally encoded yet poorly captured
by motif correlation, because the model represents occupancy through contextual
sequence beyond the canonical motif; only intervention reveals this, and
Proposition~\ref{prop:sep} explains why magnitude-based attribution would have
missed it. Third, the emergence of binding features in a model trained purely on
reference sequence, with no cell-type labels, indicates that self-supervised
genomic models internalize determinants of cell-type-specific occupancy latent in
sequence context, an observation that invites systematic study.

Our analysis has limitations that also define natural extensions. We focused on
two masked-language-model encoders; autoregressive and state-space genomic
models~\citep{nguyen2024evo,schiff2024caduceus,gu2023mamba} expose different
readouts and merit dedicated treatment. We examined three well-characterized
factors; the framework scales directly to the hundreds of factors with ENCODE
ChIP-seq, enabling a systematic atlas of which regulatory programs genomic models
encode and use. Finally, the same causal machinery could be turned from
validation to steering, editing binding features to test sequence-design
hypotheses in silico.

\section{Conclusion}

We presented a purely computational framework that extracts interpretable features
from genomic language models, validates them against experimental binding data
while controlling for the compositional confounds that plague naive analyses, and
tests by direct intervention whether the model actually uses each feature to
represent binding. The framework recovers reproducible, causally validated
transcription-factor binding features across two architectures with different
tokenizations and across three transcription factors of distinct regulatory
function, while returning clean nulls under two classes of negative control. By
coupling sparse dictionary learning to causal intervention and disciplined
controls, it offers the regulatory-genomics community a reusable standard for
interpretability claims: a feature is real not when it correlates with a motif,
but when the model demonstrably uses it.





\bibliographystyle{unsrtnat}
\bibliography{references}

\end{document}